\newcolumntype{?}{!{\vrule width 1.2pt}}
\newtheorem{lmm}{Lemma}
\newtheorem{thm}{Proposition}
\newtheorem{cor}{Corollary}
\DeclareMathOperator{\spn}{\operatorname{span}}
\newcommand{\multiprepareC}[2]{*+<1em,.9em>{\hphantom{#2}}\save[0,0].[#1,0];p\save !C
  *{#2},p+RU+<0em,0em>;+LU+<+.8em,0em> **\dir{-}\restore\save +RD;+RU **\dir{-}\restore\save
  +RD;+LD+<.8em,0em> **\dir{-} \restore\save +LD+<0em,.8em>;+LU-<0em,.8em> **\dir{-} \restore \POS
  !UL*!UL{\cir<.9em>{u_r}};!DL*!DL{\cir<.9em>{l_u}}\restore}
\begin{document}
\setlength{\textheight}{8.0truein}

\runninghead{The signaling dimension in generalized probabilistic theories}
            {Michele Dall'Arno, Alessandro Tosini, and Francesco Buscemi}

\normalsize\textlineskip
\thispagestyle{empty}
\setcounter{page}{411}


\vspace*{0.88truein}

\alphfootnote

\fpage{411}

\centerline{\bf 
THE SIGNALING DIMENSION IN GENERALIZED PROBABILISTIC THEORIES}

\vspace*{0.37truein}

\centerline{\footnotesize 
MICHELE DALL'ARNO}

\vspace*{0.015truein}

\centerline{\footnotesize\it Department  of Computer Science
  and  Engineering,  Toyohashi   University  of  Technology,
  Japan}

\baselineskip=10pt

\centerline{\footnotesize\it michele.dallarno.mv@tut.jp}

\vspace*{10pt}

\centerline{\footnotesize 
ALESSANDRO TOSINI}

\vspace*{0.015truein}

\centerline{\footnotesize\it Physics  Department, University
  of Pavia, Italy}

\baselineskip=10pt

\centerline{\footnotesize\it alessandro.tosini@unipv.it}

\vspace*{10pt}

\centerline{\footnotesize 
FRANCESCO BUSCEMI}

\vspace*{0.015truein}

\centerline{\footnotesize\it Graduate School of Informatics,
  Nagoya University, Japan}

\baselineskip=10pt

\centerline{\footnotesize\it buscemi@nagoya-u.jp}

\vspace*{0.225truein}


\vspace*{0.21truein}

\abstracts{
  The  signaling  dimension  of   a  given  physical  system
  quantifies  the minimum  dimension of  a classical  system
  required to reproduce all input/output correlations of the
  given system. Thus, unlike other dimension measures - such
  as the dimension of the linear space or the maximum number
  of (jointly or pairwise)  perfectly discriminable states -
  which examine  the correlation  space only along  a single
  direction, the signaling dimension  does not depend on the
  arbitrary choice  of a specific operational  task. In this
  sense, the signaling dimension summarizes the structure of
  the  entire set  of  input/output correlations  consistent
  with  a given  system in  a single  scalar quantity.   For
  quantum  theory, it  was  recently proved  by Frenkel  and
  Weiner in  a seminal  result that the  signaling dimension
  coincides with the Hilbert space dimension.
}{
  Here, we  derive analytical and algorithmic  techniques to
  compute the  signaling dimension  for any given  system of
  any given generalized probabilistic theory.  We prove that
  it  suffices   to  consider  extremal   measurements  with
  ray-extremal effects, and we  bound the number of elements
  of   any  such   measurement  in   terms  of   the  linear
  dimension. For  systems with  a finite number  of extremal
  effects,  we  recast  the problem  of  characterizing  the
  extremal  measurements with  ray-extremal  effects as  the
  problem of  deriving the vertex description  of a polytope
  given  its face  description,  which  can be  conveniently
  solved  by   standard  techniques   such  as   the  double
  description  algorithm.   For  each such  measurement,  we
  recast  the computation  of the  signaling dimension  as a
  linear program, and we  propose a combinatorial branch and
  bound algorithm to reduce its  size.  We apply our results
  to  derive  the  extremal measurements  with  ray-extremal
  effects of  a composition of  two square bits  (or squits)
  and  prove that  their signaling  dimension is  five, even
  though each squit has a signaling dimension equal to two.
}{}

\vspace*{10pt}

\keywords{signaling  dimension,   generalized  probabilistic
  theory, GPT, square bit, squit, extremal measurements}

\vspace*{1pt}\textlineskip

\section{Introduction}
\noindent
Generalized probabilistic theories~\cite{Spe07,Bar07,CDP11},
of which classical and quantum theories represent particular
instances,   represent   the   most   general   mathematical
description   of   a   physical   theory.    A   generalized
probabilistic theory  is specified  in terms  of the  set of
states and  the set of  effects of its systems.   The former
correspond to the admissible  preparation procedures for the
system, while  the latter represent the  admissible building
blocks of measurements. A rule to generate composite systems
and  their  allowed  dynamics   can  be  specified,  further
enriching the structure of the theory.

Within  any  given   generalized  probabilistic  theory,  an
operationally relevant problem is  to quantify the dimension
of     any     given      system.      Several     different
quantifiers~\cite{HW12,  BKLS14}   of  dimension   had  been
introduced, including  the dimension of the  linear space of
the states and effects of  the system, or the maximum number
of (jointly or pairwise) perfectly distinguishable states of
the system.  On the one  hand, the former quantifier lacks a
direct   operational   interpretation   in  terms   of   the
input/output correlations  achievable by the system;  on the
other  hand,  the  latter  quantifier is  dependent  on  the
arbitrary  choice   of  an   operational  task,   and  hence
investigates the space of  input/output correlations along a
single direction only.

In    stark    contrast    with    that,    the    signaling
dimension~\cite{Dal22}  does  not  depend on  the  arbitrary
choice of a specific  operational task, and hence summarizes
the structure of the entire set of input/output correlations
that is  consistent with a  given system in a  single scalar
quantity. Formally,  the signaling  dimension of  the system
quantifies the minimum dimension of any simulating classical
system, that is, any classical system that can reproduce all
the input-output correlations of the given system.

For quantum  theory, it  was recently  proved~\cite{FW15} by
Frenkel  and  Weiner in  a  groundbreaking  result that  the
signaling  dimension   coincides  with  the   Hilbert  space
dimension.   Subsequently,  the  problem  of  computing  the
signaling   dimension  in   different  contexts   has  drawn
considerable    attention,     for    instance     in    the
case~\cite{HSGB18, DBB20,  DC21, CGDJ21, FW22}  of arbitrary
classical  and   quantum  channels,   as  well  as   in  the
case~\cite{DBTBV17,  MK18,  HKL20,  Fre22}  of  channels  in
generalized probabilistic theories.

In  this   work,  we   derive  analytical   and  algorithmic
techniques to compute the  signaling dimension for any given
system of  any given  generalized probabilistic  theory.  We
split the  problem of  computing the signaling  dimension in
two  steps:   i)  the   characterization  of   the  extremal
measurements of  any given system,  that is relevant  in its
own right for optimization problems other than the signaling
dimension, and  ii) the actual computation  of the signaling
dimension   given    the   characterization    of   extremal
measurements.

Concerning the first step, we prove that, when computing the
signaling  dimension,  it  suffices  to  consider  extremal
measurements with ray-extremal effects, and we show that the
number  of  elements of  any  such  a measurement  is  upper
bounded by  the dimension of  the linear space.  For systems
whose set of admissible effects is a polytope, we recast the
problem  of characterizing  the  extremal measurements  with
ray-extremal effects as the problem of deriving the vertices
description of a polytope given its faces description.  Such
a problem can be  conveniently solved by standard techniques
such as the double description algorithm.

Regarding the  second step, that is,  the actual computation
of  the signaling  dimension given  the characterization  of
extremal measurements,  we recast it  as a series  of linear
programs, one  for each extremal measurement.   We propose a
combinatorial, branch  and bound algorithm to  reduce such a
size  and  make it  practically  tractable.   We provide  an
implementation~\cite{Dal23}  of the  such  an algorithm,  as
well  as of  the other  algorithms discussed  in this  work,
released under a free software license.

As  a  running example  thorough  our  work, we  consider  a
composition of  two systems. For  each of such  systems, the
set of  admissible states is geometrically  represented by a
square, hence such systems are also known as square bits, or
squits.  Square  systems have been originally  introduced as
an   implementation   of  the   Popescu-Rohrlich~\cite{PR94,
  BLMP05,  Bar07, DT10}  correlations.   In  that case,  the
composition  includes all  the  entangled states  consistent
with  the  squit  local   structure,  leaving  no  room  for
entangled   effects,   a    trade-off   first   noticed   in
Ref~\cite{SB10}.  However, alternative composition rules can
be considered with  a richer set of  entangled effects, thus
allowing   for  a   richer   characterization  of   extremal
measurement and  a non-trivial computation of  the signaling
dimension.

Incidentally, we show that alternative compositions, such as
the one  considered in Ref.~\cite{Jan12},  are inconsistent,
that   is,  they   contain   well-formed  experiments   that
nonetheless  give   rise  to  negative   probabilities.   We
classify all the consistent  composition rules of two squits
and focus on the instance considered in Ref.~\cite{DBTBV17},
a  ``dual  version''  of  the  Popescu-Rohrlich  boxes  that
includes  all  possible  entangled effects  but  only  local
preparations. We apply the present algorithmic techniques to
derive the  extremal measurements with  ray-extremal effects
of such a model and prove that its signaling dimension five.
Incidentally, this shows the tightness of the lower bound in
Ref.~\cite{DBTBV17} for  the signaling  dimension of  such a
model.

The     paper    is     structured    as     follows.     In
Section~\ref{sec:formalization} we formalize  the problem of
computing  the   signaling  dimension  as   an  optimization
problem.  In Section~\ref{sec:composition}  we introduce our
running  example  by   discussing  the  completely  positive
compositions      of      two     square      bits.       In
Section~\ref{sec:measurements}  we   derive  analytical  and
algorithmic results on the  characterization of the extremal
measurements   with  ray-extremal   effects  of   any  given
system. In Section~\ref{sec:signalingdimension} we provide a
combinatorial,  branch and  bound algorithm  for the  exact,
closed-form  computation   of  the  signal   dimension.   We
summarize   our   results   and  discuss   possible   future
developments in Section~\ref{sec:conclusion}.

\section{Formalization}
\noindent
In this  section we  introduce the object  of study  of this
work,  that is,  the  signaling dimension,  and the  running
example given by the composition of two squit systems.

\subsection{Signaling dimension}
\label{sec:formalization}
\noindent
A physical  system $S$ of linear  dimension $\ell ( S  ) \in
\mathbb{N}$  can be  represented  by  a pair  $(\mathcal{S},
\mathcal{E})$, where $\mathcal{S} \subseteq \mathbb{R}^\ell$
and $\mathcal{E} \subseteq \mathbb{R}^\ell$  are the sets of
admissible   states   and    effects,   respectively.    The
probability  of measuring  the  effect  $e \in  \mathcal{E}$
given  the state  $\omega \in  \mathcal{S}$ is  given by  $e
\cdot \omega$, and the effect $\overline{e}$ that gives unit
probability for any deterministic preparation is called unit
effect.     This     effect    is    unique     in    causal
theories~\cite{CDP11},  and it  corresponds to  the identity
operator in the quantum case.  For any given system $S$, let
$\mathcal{P}^{m    \to    n}_S$    denote   the    set    of
$m$-input/$n$-output  conditional probability  distributions
that can be generated by  system $S$ with shared randomness.
That is, $p$  is an element of $\mathcal{P}^{m  \to n}_S$ if
and  only  if  there exists  states  $\{  \omega_{x|\lambda}
\}_{x, \lambda}$  and measurements $\{  E_{y|\lambda} \}_{y,
  \lambda}$ such that
\begin{align*}
  p_{y|x}  =  \sum_{\lambda} q_{\lambda}  \omega_{x|\lambda}
  \cdot E_{y|\lambda},
\end{align*}
for   some   probability    distribution   $\{   q_{\lambda}
\}_{\lambda}$.              The            \textit{signaling
  dimension}~\cite{DBTBV17} $\kappa$ of a  system $S$ is the
minimum dimension $d$  of a classical system  $C_d$ that can
reproduce the  input/output correlations attainable  by $S$,
that is
\begin{align*}
  \kappa  \left(  S \right)  :=  \min_{d  \in \mathbb{N}}  d
  \qquad  \textrm{s.t.}   \qquad  \mathcal{P}^{m   \to  n}_S
  \subseteq \mathcal{P}^{m \to n}_{C_d},
\end{align*}
for  any  $m,  n  \in  \mathbb{N}$.  In  particular,  for  a
classical   system  $C_d$   of  dimension   $d$,  the   sets
$\mathcal{S}$  and $\mathcal{E}$  of  admissible states  and
effects,  respectively,  are  known  to  be  represented  by
regular $d-1$ simplices in a linear space of dimension $\ell
( C_d )  = d$. For instance, for the  (classical) bit, trit,
and  quart,  the states  are  represented  by a  segment,  a
triangle, and  a tetrahedron,  respectively, and so  are the
effects.

The       signaling      dimension       was      originally
introduced~\cite{DBTBV17}    in    connection    with    the
no-hyper\-signaling  principle,  which  is  a  scaling  rule
stipulating that the signaling  dimension of the composition
of any  given systems cannot  be larger than the  product of
the signaling dimension of each  system. In other words, the
no-hypersignaling   principle   constraints  the   time-like
correlations that  any given system can  exhibit, and allows
to  rule out  as  unphysical even  systems whose  space-like
correlations  are instead  compatible with  quantum or  even
classical  theory.  In  particular,  the  fact that  quantum
theory  satisfies such  a  scaling rule  follows  only as  a
consequence of the  aforementioned recent result~\cite{FW15}
by Frenkel and Weiner.

\subsection{Compositions of square bits}
\label{sec:composition}
\noindent
Here we  introduce a toy  model theory, i.e. the  square bit
(or  squit)  and its  compositions,  that  will serve  as  a
running example  across all this  work.  The squit  system S
has  linear  dimension  $\ell(S)  = 3$,  namely  its  states
$\omega$  and  effects  $e$  are  described  by  vectors  in
$\mathbb{R}^3$.    We   now    specify   the   convex   sets
$\mathcal{S}$ and $\mathcal{E}$.  The extremal states of the
squit (that is,  the four vertices of the  square) are given
by the  four vectors  $\{ \omega_k  = U_k^s  \omega_0 \}_k$,
with $\omega_0 =  (1, 0, 1)^T$.  Here, $\{  U_k^s \}_{k, s}$
are the  reversible transformations of the  system (that is,
the  symmetries  of  the  square  or,  more  generally,  the
transformation  that  leave  the set  of  admissible  states
invariant, as  opposed to those transformations  that shrink
it) given by the following rotations and reflections
\begin{align*}
  U_k^s =
  \begin{pmatrix}
    \cos \frac{\pi k}2 & -s \sin \frac{\pi k}2 & 0 \\
    \sin \frac{\pi k}2 & s \cos \frac{\pi k}2 & 0 \\
    0 & 0 & 1
  \end{pmatrix},
\end{align*}
with $k \in  \{0, 1, 2, 3\}$  and $s = \pm  1$.

By explicit  computation, the  effect space  dual to  such a
state space  (that is, the set  of effects $e$ such  that $e
\cdot  \omega \ge  0$) is  given (up  to a  positive scaling
factor) by  the convex hull  of $\{  e_k = U_k^s  e_0 \}_k$,
with $e_0 = (1, 1, 1)^T$.

Any  composition  of  two squits  necessarily  includes  the
factorized extremal states and effects, that is
\begin{align}
  \label{eq:states}
  \Omega_{4i+j} := \omega_i \otimes \omega_j,
\end{align}
and
\begin{align}
  \label{eq:effects}
  E_{4i+j} := e_i \otimes e_j,
\end{align}
respectively, where $i, j \in \{ 0, 1, 2, 3 \}$, so that the
model  has  to  include   at  least  the  bipartites  states
$\Omega_0,  \dots  \Omega_{15}$  and the  bipartite  effects
$E_0, \dots  E_{15}$. Additionally, by  explicit computation
the   set   of   effects   dual   to   Eq.~\eqref{eq:states}
includes~\cite{DT10,  SB10}  the   eight  entangled  effects
$E_{16}$ up to $E_{23}$ given by the columns of
\begin{align*}
  \begin{pmatrix}
    -1 & -1 & 1 & 1 &-1 & 1 & 1 &-1\\
    1 &-1 &-1 & 1 & 1 & 1 &-1 &-1\\
    0 & 0 & 0 & 0 & 0 & 0 & 0 & 0\\
    1 &-1 &-1 & 1 &-1 &-1 & 1 & 1\\
    1 & 1 &-1 &-1 &-1 & 1 & 1 &-1\\
    0 & 0 & 0 & 0 & 0 & 0 & 0 & 0\\
    0 & 0 & 0 & 0 & 0 & 0 & 0 & 0\\
    0 & 0 & 0 & 0 & 0 & 0 & 0 & 0\\
    1 & 1 & 1 & 1 & 1 & 1 & 1 & 1
  \end{pmatrix},
\end{align*}
respectively.    Analogously,  the   state  space   dual  to
Eq.~\eqref{eq:effects} includes~\cite{DT10,  SB10} the eight
entangled states $\Omega_{16}$ up to $\Omega_{23}$ given (up
to a positive scaling factor) by the columns of
\begin{align*}
  \begin{pmatrix}
   -1 & -1 & 1 & 1 &-1 & 1 & 1 &-1\\
   1 &-1& -1 & 1 &-1 &-1 & 1 & 1\\
   0 & 0 & 0 & 0 & 0 & 0 & 0 & 0\\
   1 &-1 &-1 & 1 & 1 & 1 &-1 &-1\\
   1 & 1 &-1 &-1 &-1 & 1 & 1 &-1\\
   0 & 0 & 0 & 0 & 0 & 0 & 0 & 0\\
   0 & 0 & 0 & 0 & 0 & 0 & 0 & 0\\
   0 & 0 & 0 & 0 & 0 & 0 & 0 & 0\\
   2 & 2 & 2 & 2 & 2 & 2 & 2 & 2
  \end{pmatrix},
\end{align*}
respectively. Notice that the  bipartite system $S\otimes S$
has linear dimension $\ell{(S\otimes S)} = \ell{(S)}^2 = 9$.

However,  it  is  well  known  that not  all  of  the  eight
entangled  states and  the  eight entangled  effects can  be
included  in the  same composite  system. Indeed,  composite
systems  must be  completely  positive, that  is, they  must
generate  non-negative probabilities  when connected  in any
possible  way allowed  by the  (reversible) dynamics  of the
system.  In general, to find  the reversible dynamics of the
composite system (that  is, the symmetries of  their sets of
admissible states  and effects),  one can use  the algorithm
introduced  in the  appendix  of  Ref.~\cite{DBK23}. In  the
particular   case  of   two   squits,  it   is  known   from
Ref.~\cite{GMCD10}  that  the   only  non-trivial  bipartite
reversible dynamics is the swap operator.

In Ref.~\cite{Jan12}, Janotta  considered a theory including
the   following   four  entangled   states:   $\Omega_{16}$,
$\Omega_{18}$,   $\Omega_{22}$,   and   $\Omega_{23}$   [see
  Eqs.~(10), (11),  (12), and (9)  therein], as well  as the
following four  entangled effects (up to  a positive scaling
factor):  $E_{17}$, $E_{19}$,  $E_{20}$,  and $E_{21}$  [see
  Eqs.~(18), (19),  (17), and (20) therein].   Such a theory
is not even positive, and therefore not completely positive,
that is, it includes  events whose probability of occurrence
is  strictly negative.   As an  example, take  the following
event,  obtained by  wiring (the  $\operatorname{Swap}$ gate
represents   the   swap   operator)  the   entangled   state
$\Omega_{23}$  with  the  entangled  effect  $E_{20}$  (both
included  in  Janotta's  model), and  whose  probability  is
strictly negative:
\begin{align*}
  \begin{aligned}
    \Qcircuit          @C=.75em          @R=.75em          {
      \multiprepareC{1}{\Omega_{22}}                       &
      \multigate{1}{\operatorname{Swap}}                   &
      \multimeasureD{1}{E_{20}}\\  \pureghost{\Omega_{22}} &
      \ghost{\operatorname{Swap}} & \ghost{E_{20}} }
  \end{aligned} < 0.
\end{align*}

To find the completely  positive compositions of two squits,
we first observe that the following relations hold
\begin{align}
  \label{eq:transpose1}
  \operatorname{Swap} \Omega_{20} = \Omega_{23},\\
  \label{eq:transpose2}
  \operatorname{Swap} \Omega_{21} = \Omega_{22},\\
  \label{eq:transpose3}
  \operatorname{Swap} E_{20} = E_{23},\\
  \label{eq:transpose4}
  \operatorname{Swap} E_{21} = E_{22}.
\end{align}
Moreover, any such a composition must satisfy
\begin{align}
  \label{eq:positivity}
  \begin{aligned}
    \Qcircuit      @C=1em      @R=.7em     @!       R      {
      \multiprepareC{1}{\Omega_x} &  \qw &  \gate{U^{n_0}} &
      \qw&        \qw        &        \multimeasureD{3}{E_y}
      \\  \pureghost{\Omega_x}   &  \qw   &  \qw  &   \qw  &
      \multimeasureD{1}{E_y}        &        \pureghost{E_y}
      \\ \multiprepareC{1}{\Omega_x} &  \qw & \gate{U^{n_1}}
      &\qw      &     \ghost{E_y}      &     \pureghost{E_y}
      \\ \pureghost{\Omega_x} & \qw & \gate{U^{n_2}} & \qw &
      \qw & \ghost{E_y} }
  \end{aligned} \ge 0,
\end{align}
where $U$ is given by
\begin{align*}
  U \propto
  \begin{aligned}
    \Qcircuit @C=1em @R=.7em @! R {
      & \multimeasureD{1}{E_y} \\
      \multiprepareC{1}{\Omega_x} & \ghost{E_y}  \\
      \pureghost{\Omega_x} & \qw}
  \end{aligned} \;,
\end{align*}
and  $n_i =  0,  1$ with $i \in \{ 0, 1, 2 \}$.

By    explicit   computation,    Eqs.~\eqref{eq:transpose1},
\eqref{eq:transpose2},                \eqref{eq:transpose3},
\eqref{eq:transpose4},  and~\ref{eq:positivity} above  imply
that the only possible compositions are:
\begin{description}
\item[PR  model] all  eight entangled  states, no  entangled
  effect, free  local dynamics, so called  since it produces
  Popescu-Rohrlich correlations;
\item[HS  model] no  entangled  state,  all eight  entangled
  effects, free local dynamics,  so called since it violates
  the no-hypersignaling principle~\cite{DBTBV17};
\item[four   frozen  models]   only   one  entangled   state
  $\Omega_x$  and one  entangled effect  $E_x$, with  $x \in
  \{16, 17, 18, 19\}$,  no non-trivial local dynamics (hence
  the name of the models).
\end{description}
Any  other  composition of  two  squits  is necessarily  not
completely  positive  and  possibly not  even  positive,  as
Janotta's model.  Finally, it is  very easy to see that such
models are  completely positive, by observing  that for each
of them one has
\begin{align*}
  \begin{aligned}
    \Qcircuit      @C=1em      @R=1em      {
      \multiprepareC{1}{\Omega_x}                          &
      \qw\\              \pureghost{\Omega_x}              &
      \multimeasureD{1}{E_x}\\ \multiprepareC{1}{\Omega_x} &
      \ghost{E_x}\\ \pureghost{\Omega_x} & \qw}
  \end{aligned} \; \propto \;
  \begin{aligned}
    \Qcircuit      @C=1em      @R=1em       {
      \multiprepareC{1}{\Omega_x} & \qw\\
      \pureghost{\Omega_x}&\qw}
  \end{aligned} \;,
\end{align*}
as well as
\begin{align*}
  \begin{aligned}
    \Qcircuit      @C=1em      @R=1em      {
      & \multimeasureD{1}{E_x}\\
      \multiprepareC{1}{\Omega_x} & \ghost{E_x}\\
      \pureghost{\Omega_x} & \multimeasureD{1}{E_x}\\
      & \ghost{E_x}
    }
 \end{aligned} \; \propto \;
 \begin{aligned}
    \Qcircuit      @C=1em      @R=1em    {
      & \multimeasureD{1}{E_x}\\
      & \ghost{E_x}}
  \end{aligned} \;,
\end{align*}
and finally
\begin{align*}
  \operatorname{Swap} \Omega_x = \Omega_x,\\
  \operatorname{Swap} E_x = E_x,
\end{align*}
for any $x \in \{16, 17, 18, 19\}$.

\section{Main results}
\noindent
In  this section  we introduce  our main  results, that  is,
analytical    and    algorithmic    techniques    for    the
carachterization  of  all  the  extremal  measurements  with
ray-extremal effects of any given system, as well as for the
computation of its signaling dimension.

\subsection{Extremal measurements with ray-extremal effects}
\label{sec:measurements}
\noindent
It is  well-known~\cite{Par99, DLP05} that  extremal quantum
measurements can comprise effects  that are not ray-extremal
(that is, they are not  rank-one projectors). However, it is
also   known~\cite{Dav78}  that,   whenever  optimizing   an
objective function  that is  convex in the  measurement, one
can  restrict  to  extremal measurements  with  ray-extremal
effects.   In  the  following   we  extend  this  result  to
generalized probabilistic theories.

Before proceeding,  let us  give a convenient  definition of
measurement.   To do  so,  we first  stipulate to  normalize
effects  so  that  any  measurement of  the  system  can  be
expressed as a probability distribution $p$ over the effects
such that $\sum_y  p_y e_y = \overline{e}$,  the unit effect
with unit  probability over any  state. As a  comparison, in
quantum theory such a choice would be equivalent to defining
effects as operators  with trace equal to  the Hilbert space
dimension (the same normalization as the identity operator),
so  that  finite  positive  operator-valued  measures  would
actually  be   uniquely  identified  by   the  corresponding
probability distributions.  Hence, for instance,  in quantum
theory  rank-one   projectors  are  the   only  ray-extremal
effects. Any  other projector, while an  extremal effect, is
not ray-extremal.

The  relevance of  extremal  measurements with  ray-extremal
effects (or,  equivalently, extremal normalized  effects) is
made clear by the following two trivial observations: i) the
maximum of any convex  objective function of the measurement
is attained by an extremal  measurement, and ii) the maximum
of any given  objective function of the  measurement that is
non-decreasing  under   fine  graining  is  attained   by  a
measurement with  extremal normalized  effects.  It  is thus
important   to   characterize    those   measurements   with
ray-extremal effects that are  extremal: this is achieved by
the  following   lemma,  which  was  first   proved  in  the
supplemental material of Ref.~\cite{DBTBV17}.

\begin{lmm}[Characterization of extremal measurements with ray-extremal effects]
  \label{lmm:measurements}
  For any measurement $M = \{ p_y > 0, e_y \}$ with extremal
  normalized effects  $\{ e_y \}$, the  following conditions
  are equivalent:
  \begin{enumerate}
  \item\label{item:extremal} $M$ is extremal,
  \item\label{item:independent}  $\{  e_y \}$  are  linearly
    independent.
  \end{enumerate}
\end{lmm}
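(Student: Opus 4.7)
The plan is to prove (2)$\Rightarrow$(1) directly and (1)$\Rightarrow$(2) by contrapositive, leveraging two structural facts built into the normalization convention adopted in the excerpt. First, each $e_y$ is ray-extremal in the effect cone, so by the defining property of an extreme ray, any decomposition $p_y e_y = f_1 + f_2$ with $f_1, f_2$ in the cone forces both $f_1$ and $f_2$ to be nonnegative multiples of $e_y$. Second, the stipulation that $\sum_y p_y e_y = \overline{e}$ is represented by a probability distribution $p$ implies that all normalized effects lie on a common affine hyperplane $\{e : \omega_\ast \cdot e = 1\}$ for some fixed deterministic state $\omega_\ast$ (for example, the maximally mixed state in the quantum case, where effects are rescaled to have trace equal to the Hilbert space dimension).

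For (2)$\Rightarrow$(1), I assume the effects $\{e_y\}$ are linearly independent and that $M = t M' + (1-t) M''$ for measurements $M', M''$ sharing the same outcome labels, so componentwise $p_y e_y = t E'_y + (1-t) E''_y$. Ray-extremality of each $e_y$ forces $E'_y = p'_y e_y$ and $E''_y = p''_y e_y$ for suitable $p'_y, p''_y \geq 0$. Using $\sum_y p'_y e_y = \sum_y p_y e_y = \overline{e}$ together with the linear independence assumption, I conclude $p'_y = p_y$ for all $y$, and symmetrically $p''_y = p_y$, so $M' = M'' = M$ and $M$ is extremal.

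For (1)$\Rightarrow$(2), I argue by contrapositive: suppose $\{e_y\}$ are linearly dependent, fix coefficients $\alpha \neq 0$ with $\sum_y \alpha_y e_y = 0$, and pair with $\omega_\ast$ to infer $\sum_y \alpha_y = 0$. Then $p^{\pm}_y := p_y \pm \varepsilon \alpha_y$ are, for $\varepsilon > 0$ sufficiently small (here the assumption $p_y > 0$ is used), strictly positive and sum to one, and both satisfy $\sum_y p^{\pm}_y e_y = \overline{e}$. Hence $M^{\pm} := \{p^{\pm}_y, e_y\}$ are distinct valid measurements averaging to $M$, so $M$ is not extremal.

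The main subtle point in the whole argument is precisely the identity $\sum_y \alpha_y = 0$: without it one could not guarantee that the perturbations remain probability distributions, and the explicit non-trivial convex decomposition would fail. This is why the normalization, which effectively selects a hyperplane section of the effect cone, is essential to the statement, and also why the equivalence would break down if one worked with unnormalized effects or only with cone-extremality rather than ray-extremality.
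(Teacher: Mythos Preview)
Your proof is correct. The direction (2)$\Rightarrow$(1) is essentially the paper's argument, spelled out more explicitly: both use ray-extremality to force the components of any decomposing measurement to be nonnegative multiples of the $e_y$'s, and then linear independence to pin down the weights uniquely.

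The direction (1)$\Rightarrow$(2) is where you genuinely diverge. The paper invokes Carath\'eodory's theorem on the affine hyperplane of normalized effects to extract a measurement $p'$ supported on at most $\dim\spn\{e_y\}$ of the $e_y$'s, hence on a strict subset, and then writes $p$ as a convex combination of this $p'$ and a residual $p''$. Your argument is more elementary: you take any nontrivial linear relation $\sum_y \alpha_y e_y = 0$, observe that the hyperplane normalization forces $\sum_y \alpha_y = 0$, and perturb $p$ symmetrically by $\pm\varepsilon\alpha$. Both proofs rely on the same structural fact --- that the normalized effects lie on a common affine hyperplane not through the origin --- but exploit it differently: the paper uses it to drop the affine dimension by one before applying Carath\'eodory, whereas you use it to guarantee $\sum_y\alpha_y=0$ so that the perturbed weights remain probability distributions. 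Your route avoids Carath\'eodory entirely and gives the most direct witness of non-extremality; the paper's route yields a little more as a byproduct, namely a decomposing measurement with strictly smaller support, which is what one would want if the aim were to iterate down to a linearly independent support rather than merely to certify non-extremality. For the lemma as stated, your perturbation argument is the shorter path.
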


\begin{proof}
  Let    us    first    prove    by    contradiction    that
  \ref{item:extremal}  implies  \ref{item:independent}.   We
  start   by  assuming   that  there   exists  an   extremal
  measurement $\{  p_y, e_y \}$ with  $p > 0$ such  that $\{
  e_y \}$ are not linearly independent, i.e.  $| \{ e_y \} |
  > \dim\spn(\{ e_y \})$.  Since $\{ e_y \}$ are normalized,
  they   belong  to   an   affine   subspace  of   dimension
  $\dim\spn(\{ e_y \}) -  1$.  Thus, applying Caratheodory's
  theorem, the  unit effect $\overline{e}$, that  belongs to
  $\spn(\{  e_y \})$  by  hypothesis, can  be decomposed  in
  terms  of  a  subset  of  $\{  e_y  \}$  with  cardinality
  $\dim\spn(\{  e_y \})$.   In other  words, there  exists a
  probability   distribution  $p'_y$,   whose  support   has
  cardinality not  greater than $\dim\spn(\{ e_y  \})$, such
  that $\sum_y p'_y e_y = \overline{e}$.  By taking $\lambda
  > 0$ such  that $p - \lambda  p' \ge 0$ (such  a $\lambda$
  always   exists   since   $p   >   0$)   and   $p''_y   :=
  (1-\lambda)^{-1}  (p_y -  \lambda  p'_y)$, it  immediately
  follows that  also $\{  p''_y, e_y  \}$ is  a measurement.
  Then measurement $M = \{ p_y, e_y \}$ can be decomposed as
  $\lambda \{p'_y,  e_y \}  + (1-\lambda) \{p''_y,  e_y \}$,
  i.e.  it is not extremal, thus leading to a contradiction.

  Let  us  now  prove  that  \ref{item:independent}  implies
  \ref{item:extremal}.  Since $\{ e_y \}$ are extremal, they
  cannot be further decomposed,  so any convex decomposition
  of $M$ would  necessarily involve a subset  of the effects
  $\{ e_y \}$.  Since such  effects $\{ e_y \}$ are linearly
  independent, the decomposition of $\overline{e}$ is unique
  and, since  $p >  0$, no subset  of $\{ e_y  \}$ can  be a
  measurement.  Therefore, the statement follows.
\end{proof}

As an immediate consequence of Lemma~\ref{lmm:measurements},
for any  extremal measurement with ray-extremal  effects one
has that the number $n$ of  such effects is upper bounded by
the  linear   dimension  of   the  system,  a   result  that
generalizes a theorem  by Davies~\cite{Dav78} to generalized
probabilistic theories.

If the extremal normalized effects  are finite in number and
given  by   the  columns   of  matrix  $E$,   a  probability
distribution $p$ is a measurement on the extremal normalized
effects if  and only  if it  satisfies the  following linear
equalities
\begin{align}
  \label{eq:equalities}
  E p = \overline{e},
\end{align}
as well as following the linear inequalities
\begin{align}
  \label{eq:inequalities}
  p \ge 0.
\end{align}
Therefore,  probability  distribution  $p$  is  an  extremal
measurement on  the extremal normalized effects  if and only
if  it  is  a  vertex  of  such  a  polytope.   Notice  that
Eqs.~\eqref{eq:equalities}       and~\eqref{eq:inequalities}
characterize   such  a   polytope   by   giving  its   faces
description.   The   extremal  measurements   with  extremal
normalized effects  can therefore  be found by  passing from
the  faces  description  to   the  vertices  description,  a
standard problem  that can be  solved e.g.  with  the double
description method~\cite{MRTT53}. This  is summarized by the
following proposition.
\begin{thm}
  For any system with a finite number of extremal normalized
  effects given by  the columns of matrix  $E$, the extremal
  measurements  with  extremal  normalized effects  are  the
  vertices of  the polytope given (in  faces description) by
  Eqs.~\eqref{eq:equalities}    and~\eqref{eq:inequalities},
  and can be found by the double description method. 
\end{thm}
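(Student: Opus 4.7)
The plan is to identify the polytope $P = \{p : Ep = \overline{e}, \; p \ge 0\}$ directly with the convex set of measurements whose effects are drawn from the columns of $E$, and then to argue that extremality in the measurement sense coincides with vertex-hood in $P$. Once that correspondence is in place, the algorithmic claim follows from the well-known applicability of the double description method to any polytope given in face description.

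The first step is essentially by definition. Under the normalization convention adopted in the paper, a non-negative vector $p$ indexed by the columns of $E$ is a measurement iff $\sum_y p_y e_y = \overline{e}$, i.e., iff $p \in P$. Hence the set of measurements with normalized effects drawn from the columns of $E$ is literally $P$, and its extreme points as a convex set coincide with its vertices as a polytope.

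The second, more substantive step is to match the vertices of $P$ with the extremal measurements characterized by Lemma~\ref{lmm:measurements}. In one direction, for a vertex $p$ of $P$ with support $S = \{y : p_y > 0\}$, a standard linear-programming argument shows that the columns of $E$ indexed by $S$ must be linearly independent: otherwise a nonzero vector $v$ supported on $S$ with $Ev = 0$ would allow perturbations $p \pm \varepsilon v \in P$ for sufficiently small $\varepsilon > 0$, contradicting vertex-hood. Lemma~\ref{lmm:measurements} then delivers extremality of the associated measurement. In the other direction, an extremal measurement $\{p_y > 0, e_y\}$ has linearly independent effects $\{e_y\}$ by the same lemma; any convex decomposition $p = \lambda q + (1-\lambda) r$ inside $P$ forces $q$ and $r$ to share the support of $p$ (since $q, r \ge 0$ and $p_y = 0$ implies $q_y = r_y = 0$), and to decompose $\overline{e}$ in the same linearly independent family $\{e_y\}_{y \in S}$, so by uniqueness $q = r = p$ and $p$ is a vertex.

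I do not foresee a serious obstacle. The proposition is effectively a repackaging of Lemma~\ref{lmm:measurements} together with the textbook vertex characterization of polytopes in standard form, plus a citation of a standard vertex-enumeration algorithm. The only care needed is to handle the support behavior of convex combinations under non-negativity and the uniqueness of expansions in a linearly independent family --- both elementary.
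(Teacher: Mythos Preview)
Your proposal is correct and follows the same approach as the paper: the paper presents this proposition as a summary of the immediately preceding discussion (identifying measurements on the extremal normalized effects with the polytope $\{p : Ep=\overline{e},\, p\ge 0\}$ and thus extremal measurements with its vertices), without a separate formal proof. Your write-up simply makes explicit the two ingredients that discussion leaves implicit --- Lemma~\ref{lmm:measurements} and the basic-feasible-solution characterization of polytope vertices --- so there is no substantive difference in route.
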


As an  application, let  us go back  to our  running example
given by the HS model,  consisting of the composition of two
squits   that  includes   eight   entangled  effects.    The
aforementioned  procedure produces  in  this case  a set  of
$408$   extremal  measurements   with  extremal   normalized
effects.  Such  a set  can be  partitioned according  to the
equivalence class induced  by the reversible transformations
of the system,  so that two such measurements  belong to the
same  class if  and  only if  they are  equivalent  up to  a
reversible transformation. By taking a single representative
for  each equivalence  class,  the reduced  set of  extremal
measurements with  extremal normalized effects is  left with
$15$      elements      only,       as      reported      in
Table~\ref{tab:measurements}.

\begin{table*}[htb!]
  \centering
  \scalebox{.55}{
  \begin{tabular}{| >{$}c<{$} | >{$}c<{$} ? >{$}c<{$} | >{$}c<{$} | >{$}c<{$} | >{$}c<{$} | >{$}c<{$} | >{$}c<{$} | >{$}c<{$} | >{$}c<{$} | >{$}c<{$} | >{$}c<{$} | >{$}c<{$} | >{$}c<{$} | >{$}c<{$} | >{$}c<{$} | >{$}c<{$} | >{$}c<{$} ? >{$}c<{$} | >{$}c<{$} | >{$}c<{$} | >{$}c<{$} | >{$}c<{$} | >{$}c<{$} | >{$}c<{$} |  >{$}c<{$} |}
    \hline
    \bf M &\bf \# & \bf E_0 & \bf E_1 & \bf E_2 & \bf E_3 & \bf E_4 & \bf E_5 & \bf E_6 & \bf E_7 & \bf E_8 & \bf E_9 & \bf E_{10} & \bf E_{11} & \bf E_{12} & \bf E_{13} & \bf E_{14} & \bf E_{15} & \bf E_{16} & \bf E_{17} & \bf E_{18} & \bf E_{19} & \bf E_{20} & \bf E_{21} & \bf E_{22} & \bf E_{23}\\
    \hline\hline
    0 & 2 &&&&&&&&&&&&&&&&& 120 && 120 &&&&&\\
    \hline\hline
    1 & 4 &60&&60&&&&&&60&&60&&&&&&&&&&&&&\\
    \hline
    2 & 4 &60&&60&&&&&&&60&&60&&&&&&&&&&&&\\
    \hline\hline
    3 & 6 &30&30&&&&&&&&&30&30&&&&&&&60&&&&&60\\
    \hline
    4 & 6 &30&&&&&30&&&&&30&&&&&30&&&&&60&&&60\\
    \hline
    5 & 6 &40&&&&&&&&&&40&&&&&&&40&40&&40&&&40\\
    \hline\hline
    6 & 7 &30&30&&&&&30&&30&&30&&&&&30&&&&&&&&60\\
    \hline\hline
    7 & 8 &20&20&&&20&&&&&&40&&&&&20&&&40&&40&&&40\\
    \hline
    8 & 8 &20&20&&&&&40&&20&20&&&&&&40&40&&&&&&&40\\
    \hline
    9 & 8 &40&20&&&&&20&&&20&&40&&&20&&&&40&&&&&40\\
    \hline
    10 & 8 &30&&&&&30&&&&&&30&&&30&&&&30&30&30&&&30\\
    \hline\hline
    11 & 9 &20&20&&&20&&20&&&20&20&&&&&40&&&&&40&&&40\\
    \hline
    12 & 9 &15&15&&&15&&30&&&30&&&&&&45&30&&&&30&&&30\\
    \hline
    13 & 9 &20&20&&&20&&&20&&&20&20&&20&20&&&&80&&&&&\\
    \hline
    14 & 9 &24&&24&&&24&&&&&&48&&24&&&&&24&24&&&24&24\\
    \hline\hline
    \bf M &\bf \# & \bf E_0 & \bf E_1 & \bf E_2 & \bf E_3 & \bf E_4 & \bf E_5 & \bf E_6 & \bf E_7 & \bf E_8 & \bf E_9 & \bf E_{10} & \bf E_{11} & \bf E_{12} & \bf E_{13} & \bf E_{14} & \bf E_{15} & \bf E_{16} & \bf E_{17} & \bf E_{18} & \bf E_{19} & \bf E_{20} & \bf E_{21} & \bf E_{22} & \bf E_{23}\\
    \hline
  \end{tabular}}
  \tcaption{Set   of  all   extremal  measurements   (up  to
    reversible transformations)  for the composition  of two
    squits  named HS  model  with  eight entangled  effects.
    Measurements are  labelled by  $M$ and represented  by a
    probability distribution  $p$ over  the set  of extremal
    normalized effects $\{ E_j \}$  (rescaled by a factor of
    $240$ so that each entry  is an integer).  The number of
    non-null values of $p$  is also reported for convenience
    in the column  indicated by the symbol  $\#$.  We recall
    that factorized effects are those from $E_0$ to $E_{15}$
    (included), while the other effects are entangled.}
  \label{tab:measurements}
\end{table*}

\subsection{The signaling dimension}
\label{sec:signalingdimension}
\noindent
We  have  already  shown  that,  in  order  to  compute  the
signaling  dimension of  any  given system,  it suffices  to
consider  the  $m$-input/$n$-output conditional  probability
distributions  generated by  the extremal  measurements with
ray-extremal effects upon the input of all the states of the
system.   Here and  in  the  following we  take  any such  a
conditional probability  distribution $p$ arranged as  an $m
\times n$ stochastic matrix, where $x$ and $y$ label the row
and  the  column, respectively.   For  any  such a  $p$,  it
suffices  to  compute  the  minimal  dimension  $d$  of  the
classical polytope  $\mathcal{P}^{m \to n}_d$  that contains
$p$.  The maximum  over such $p$'s of  the minimal dimension
$d$ constitutes the signaling dimension of the given system.

The straightforward way of achieving  this would be to check
$p$  against  all  the inequalities  that  characterize  the
facets  of   $\mathcal{P}^{m  \to  n}_d$.    However,  known
algorithms for  the characterization of such  facets require
as input  the vertices  of $\mathcal{P}^{m \to  n}_d$, which
are  too  many  to  make  this  approach  feasible  for  the
instances of the problem we are interested in. This includes
standard   algorithms  such   as   the  double   description
method~\cite{MRTT53}, as well  as fine-tuned algorithms such
as the  adjacency decomposition  algorithm by  Doolittle and
Chitambar   of  Ref.~\cite{DC21},   that  can   exploit  the
symmetries of  $\mathcal{P}^{m \to n}_d$ in  order to reduce
the  number   of  facets   they  output  by   producing  one
representative  facet  for   each  equivalence  class  under
symmetry.   For  instance,  for  $m =  16$  (the  number  of
extremal states in the HS composition of the squit), $n = 9$
(the maximum  number of  elements of  extremal measurements,
attained    by    four     such    measurements    as    per
Table~\ref{tab:decompositions}),  and $d  = 5$  (the minimum
value we are interested in),  the number of such vertices is
$\sim  2.4  \cdot   10^{13}$  (see  Lemma~\ref{lmm:vertices}
below).   From  this  issue,   the  need  arises  to  devise
techniques that do not  requrire the explicit enumeration of
all the  vertices of $\mathcal{P}^{m  \to n}_d$, as  done in
the following.

For any  given conditional probability distribution
$p$ and any classical dimension  $d$, in order to prove that
$d$ is  the minimum dimension  such that $p$ belongs  to the
classical polytope  $\mathcal{P}^{m \to n}_d$, one  needs to
provide:
\begin{itemize}
\item an  explicit convex decomposition $\mathbf{x}$  of $p$
  in terms of the vertices of $\mathcal{P}^{m \to n}_d$, and
\item  an explicit  linear witness  (or game,  or separating
  hyperplane) $g$ such that $p \cdot  g > q \cdot g$ for any
  $q \in \mathcal{P}^{m \to n}_{d - 1}$.
\end{itemize}

To  obtain a  convex  decomposition $\mathbf{x}$  of $p$  in
terms of the vertices of $\mathcal{P}^{m \to n}_d$, whenever
it exists, one  can proceed as follows.  The  problem can be
framed as the feasibility of the following linear program
\begin{align}
  \label{eq:primal}
  \min_{\substack{A \mathbf{x}  = \mathbf{b}\\\mathbf{x} \ge
      0}} \mathbf{c} \cdot \mathbf{x},
\end{align}
where $A$  is the $(mn)  \times V$ matrix whose  columns are
the $V$ vertices of  $\mathcal{P}^{m \to n}_d$ rearranged as
vectors   with  $m   \times  n$   entries  (the   particular
rearranging  is  irrelevant  as long  as  used  consistently
across the protocol); $\mathbf{b}$  is the vector with $mn$
entries  obtaining rearranging  the conditional  probability
distribution  $p$;  $\mathbf{c}$  is  the  vector  with  $V$
entries all  equal to  zero since, as  said earlier,  we are
interested  in  the  feasibility  of the  problem  only  (an
alternative option is to take $\mathbf{c}$ to be the vectors
with all entries  equal to one, which also  gives a constant
objective function since $\mathbf{x}$ is constrained to be a
probability distribution).

In turn, the  number $V$ of vertices  of $\mathcal{P}^{m \to
  n}_d$ (equivalently, the number  of columns of matrix $A$)
can itself  be expressed in terms  of $m$, $n$, and  $d$. To
see this, let us denote with
\begin{align*}
  \binom{n}{k} := \frac{n!}{k!(n-k)!}
\end{align*}
the  binomial   coefficient  and  with
\begin{align*}
  {m \brace           k}           :=           \sum_{j=0}^{k}
  \frac{1}{k!}(-1)^{k-j}\binom{k}{j} j^m
\end{align*}
the Stirling number of the  second kind, i.e.  the number of
partitions  of  a  set  of $m$  elements  in  $k$  non-empty
classes.   Then the  following result,  first proved  in the
supplemental material of Ref.~\cite{DBTBV17}, holds.

\begin{lmm}
  \label{lmm:vertices}
  The number $V$ of vertices of $\mathcal{P}^{m \to n}_d$ is
  given by
  \begin{align*}
    V = \sum_{k=1}^{d} k! \binom{n}{k} {m\brace k}.
  \end{align*}
\end{lmm}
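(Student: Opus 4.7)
The plan is to identify the vertices of $\mathcal{P}^{m \to n}_d$ with the deterministic classical protocols, count those deterministic protocols by the size of their image, and then use the standard combinatorial formula that counts surjections in terms of Stirling numbers of the second kind.

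First, I would argue that the vertices of $\mathcal{P}^{m \to n}_d$ are exactly the $\{0,1\}$ stochastic matrices corresponding to functions $h:[m]\to[n]$ whose image has cardinality at most $d$. Indeed, a classical system of dimension $d$ together with shared randomness realizes precisely the convex hull of deterministic protocols, where a deterministic protocol consists of an encoder $f:[m]\to[d]$ and a decoder $g:[d]\to[n]$ producing the conditional distribution $h=g\circ f$. Conversely, any function $h:[m]\to[n]$ with $|\mathrm{image}(h)|\le d$ factors through $[d]$ and hence is realized by some $(f,g)$. Since every such $\{0,1\}$ stochastic matrix is already a vertex of the ambient polytope of row-stochastic matrices, and $\mathcal{P}^{m \to n}_d$ is contained in that polytope, each deterministic protocol is a vertex of $\mathcal{P}^{m \to n}_d$; conversely, because $\mathcal{P}^{m \to n}_d$ is their convex hull, no other vertices can exist.

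Second, I would enumerate these functions by partitioning them according to the size $k$ of their image, where $k$ ranges over $\{1,\dots,d\}$. For fixed $k$, a function $h:[m]\to[n]$ with image of size exactly $k$ is specified by (i) choosing the $k$-element image inside $[n]$, contributing $\binom{n}{k}$, and (ii) choosing a surjection from $[m]$ onto this $k$-element set. The number of surjections from an $m$-set onto a $k$-set is the classical count $k!\,{m\brace k}$, obtained by partitioning $[m]$ into $k$ non-empty blocks (the Stirling number) and then bijectively labelling these blocks with the $k$ chosen output values ($k!$ ways). Summing the product $k!\binom{n}{k}{m\brace k}$ over $k=1,\dots,d$ yields the stated formula.

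The only step requiring care is the identification of the vertex set. The combinatorial counting is then routine. The subtle point is checking that no deterministic protocols coincide (so the count is not an overcount): two protocols that differ as $(f,g)$ pairs may still induce the same $h=g\circ f$, but the counting above is parametrized directly by the resulting function $h$ rather than by $(f,g)$, so no overcount arises. With this clarification the formula follows immediately.
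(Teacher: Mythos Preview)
Your proposal is correct and follows essentially the same counting argument as the paper: both partition the vertices by the size $k$ of the image (equivalently, the number of non-null columns), choose the $k$ columns via $\binom{n}{k}$, and then count surjections from $[m]$ onto those $k$ columns as $k!\,{m\brace k}$. If anything, you are more careful than the paper in explicitly justifying that the vertices of $\mathcal{P}^{m\to n}_d$ are precisely the deterministic functions $h:[m]\to[n]$ with image of size at most $d$, a point the paper simply takes as understood.
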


\begin{proof}
  The  statement follows  by a  a simple  counting argument.
  First, one chooses $k \leq  d$ non-null columns (as stated
  earlier,   different  columns   correspond  to   different
  effects); since  the matrix has  a $n$ columns,  there are
  $\binom{n}{k}$ such possible choices.  Then, for each such
  a choice one has to assign a single one to each row within
  the $k$ columns that were  chosen; in other words, one has
  to assign each  row (playing here the role  of the element
  of a set) to a column (playing here the role of an element
  of a partition of such a  set).  There are exactly $k!  {m
    \brace  k}$ possible  assignments,  where the  factorial
  $k!$ comes  from the  fact that here  the elements  of the
  partition  are  labeled,  while   the  definition  of  the
  Stirling number of the second kind does not take this into
  account.
\end{proof}

In typical instances of the problem (see our running example
on the composition  of two squits, discussed  later on), $V$
is too large for matrix  $A$ to be practically tractable. In
this case, the following steps  can reduce the complexity of
the problem.  First, observe that  any row of $p$ (as stated
earlier, different rows correspond to different states) that
is the  convex combination of  other rows can  be eliminated
without  altering the  result, thus  reducing the  effective
value  of $m$  (and thus  $V$) without  loss of  generality.
Also,  in typical  applications the  conditional probability
distributions $p$'s  are rather sparse,  a fact that  can be
exploited  as follows.   Any vertex  of $\mathcal{P}^{m  \to
  n}_d$  that  contains an  entry  equal  to one  where  $p$
contains  a   zero  will   not  contribute  to   the  convex
decomposition of $p$; hence can be discarded without loss of
generality, thus helping to further reduce the number $V$ of
vertices.   In the  following,  we denote  with  $v$ such  a
reduced number of vertices, and we refer to it as the number
of   effective  vertices.   We  have   then  the   following
proposition.
\begin{thm}
  Any   given   $m   \times   n$   conditional   probability
  distribution $p$ belongs to the polytope $\mathcal{P}_d^{m
    \to  n}$ if  and only  if  $q$ belongs  to the  polytope
  $\mathcal{Q}_d^{m' \to n}$, where $q$ is $m' \times n$ and
  is  the same  as  $p$ where  any row  that  is the  convex
  combination of rows is removed, and $\mathcal{Q}_d^{m' \to
    n}$ is the same as  $\mathcal{P}_d^{m' \to n}$ where any
  vertex that  has an entry  equal to  one where $q$  has an
  entry equal to zero has been removed.
\end{thm}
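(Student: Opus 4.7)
The plan is to split the biconditional into two independent reductions, one dealing with the rows of $p$ and one with the vertices of $\mathcal{P}^{m\to n}_d$, and prove each of them preserves membership in the classical polytope. A key fact I would use throughout is that the vertices of $\mathcal{P}^{m\to n}_d$ are exactly the deterministic classical channels factoring through an alphabet of size $d$, hence they are $\{0,1\}$-valued stochastic matrices of the form $\delta_{y,y(\lambda(x))}$ for some encoder $x\mapsto\lambda(x)\in\{1,\dots,d\}$ and decoder $\lambda\mapsto y(\lambda)\in\{1,\dots,n\}$. This is precisely the picture underlying the counting in Lemma~\ref{lmm:vertices}.

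For the row reduction, the forward implication is straightforward: if $p \in \mathcal{P}^{m\to n}_d$, then the submatrix $q$ obtained by deleting any subset of rows is still realized by the same classical simulator (just restricted to the surviving inputs), so $q \in \mathcal{P}^{m'\to n}_d$. For the backward implication, suppose $q \in \mathcal{P}^{m'\to n}_d$ with explicit convex decomposition $q = \sum_i \mu_i V_i$ in terms of vertices $V_i$, and suppose the removed row $p_{x^*}$ satisfies $p_{y|x^*} = \sum_{x \neq x^*} \lambda_x p_{y|x}$ for some probability distribution $\lambda$. Then I would extend each vertex $V_i$ of $\mathcal{P}^{m'\to n}_d$ to a stochastic matrix on $m$ rows by appending, at row $x^*$, the convex combination $\sum_{x \neq x^*} \lambda_x (V_i)_{x,\cdot}$; this extended matrix lies in $\mathcal{P}^{m\to n}_d$ because the encoder for input $x^*$ can sample $x$ with probability $\lambda_x$ and proceed as $V_i$ would on input $x$. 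Iterating row by row handles all rows that are convex combinations of the remaining rows.

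For the vertex reduction, the argument is purely about supports. If $q \in \mathcal{P}^{m'\to n}_d$ is written as $q = \sum_i \mu_i V_i$ with $\mu_i > 0$, then for every entry $(x,y)$ with $q_{y|x}=0$, and since every $V_i$ is nonnegative, each contributing $V_i$ must itself satisfy $(V_i)_{y|x} = 0$. Because the $V_i$ are $\{0,1\}$-valued, this is equivalent to requiring $(V_i)_{y|x} \neq 1$. Hence the vertices discarded in passing from $\mathcal{P}^{m'\to n}_d$ to $\mathcal{Q}^{m'\to n}_d$ carry weight zero in every convex decomposition of $q$, and the two polytopes have identical intersection with the face defined by the zero entries of $q$. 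Combining the two reductions yields the claimed equivalence.

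I do not expect a genuine obstacle here: both reductions are routine support arguments, and the only point requiring care is the inductive handling of the row-removal step, where one must verify that removing one row at a time preserves the property that the remaining removed rows stay convex combinations of the kept ones—this is immediate because convex combinations are closed under substitution. The cleanest write-up would treat the two reductions as separate lemmas and then chain them.
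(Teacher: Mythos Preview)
Your proposal is correct and follows essentially the same route as the paper: the paper does not give a formal proof environment for this proposition at all, but justifies it in the preceding paragraph by exactly the two observations you spell out---that rows which are convex combinations of other rows can be dropped without loss of generality, and that any vertex with a $1$ where $p$ has a $0$ cannot carry weight in a convex decomposition of $p$ and can therefore be discarded. Your write-up is in fact more detailed than the paper's, in particular the explicit extension of each vertex $V_i$ by the randomized row $\sum_{x\neq x^*}\lambda_x(V_i)_{x,\cdot}$ for the backward row-reduction step, and the remark on iterating the removal; the paper leaves all of this implicit.
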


The process to reduce the number of vertices from $V$ to $v$
can be  efficiently implemented  as a  combinatorial, branch
and bound algorithm.  The algorithm  starts on the first row
of matrix $p$, cycles over any choice of non-null entries of
that row,  and for each  choice recursively calls  itself on
the  second  row.   At  each call,  the  bounding  procedure
consists of verifying if the number of columns from which an
entry has been chosen so far  is larger than $d$; if so, the
branch is pruned. Any  pre-determined branching strategy can
be adopted.

If the above steps do not suffice to make the linear program
in  Eq.~\ref{eq:primal} tractable,  the  technique known  as
delayed column  generation~\cite{PS82, Chv83,  GLS88, Ala99}
can  be  adopted,  by  observing  that  it  is  possible  to
efficiently generate  the vertex  of the  classical polytope
$\mathcal{P}^{m \to  n}_d$ that minimizes the  inner product
with any  given vector;  this immediately gives  the reduced
cost to be used in delayed column generation.

To obtain a linear witness  $g$ that separates the classical
polytope  $\mathcal{P}^{m \to  n}_d$ from  $p$, whenever  it
exists,  one can  proceed as  follows.  The  problem can  be
framed as the  following linear program, dual of  the one in
Eq.~\ref{eq:primal} except  for the inclusion of  a bounding
box around variable $\mathbf{y}$:
\begin{align}
  \label{eq:dual}
  \max_{\substack{A^T  \mathbf{y} =  \mathbf{c}\\-\mathbf{u}
      \le  \mathbf{y}  \le   \mathbf{u}}}  \mathbf{b}  \cdot
  \mathbf{y},
\end{align}
where matrix  $A$ and vectors $\mathbf{b}$  and $\mathbf{c}$
are defined as above, and  vector $\mathbf{u}$ is the vector
with  $mn$  entries  all  equal to  one.   Notice  that,  if
$\mathbf{c}$  has  been taken  to  be  the vector  with  all
entries equal to one, then one  does not need to include the
bounding box condition. The  witness (or game, or separating
hyperplane) $g$ is thus recovered by rearranging the entries
of vector $\mathbf{y}$ as a matrix.

As    noticed   above    for   the    primal   problem    in
Eq.~\eqref{eq:primal},      the     dual      problem     in
Eq.~\eqref{eq:dual} too  is typically intractable,  that is,
there  are  too  many   constraints.   In  addition  to  the
techniques discussed above to reduce the size of matrix $A$,
one can in this  case adopt the ellipsoid method~\cite{PS82,
  Chv83,    GLS88,    Ala99}    or   the    cutting    plane
method~\cite{PS82,  Chv83,  GLS88,  Ala99}, by  using  as  a
separation   oracle   the   aforementioned   function   that
efficiently  returns the  vertex of  the classical  polytope
$\mathcal{P}^{m  \to n}_{d  - 1}$  that minimizes  the inner
product with any given vector.

As an application of the  discussion of this section, let us
return to  our running  example by considering  the extremal
measurements with  ray-extremal effects for  any composition
of   two  squits   given  in   Table~\ref{tab:measurements}.
Considering the composition named HS-model that includes all
eight entangled effects (and  therefore no entangled state),
upon the input of the $16$ (factorized) extremal states each
such  measurement gives  rise to  a conditional  probability
distribution with $16$ rows and a number of columns equal to
the  number of  effects.   Notice that,  when computing  the
signaling  dimension,  it   suffices  to  consider  extremal
measurements with  at least four ray-extremal  effects, that
is,    measurements   number    $3$    up    to   $14$    in
Table~\ref{tab:measurements}.

The two  steps discussed above  (that is, the  derivation of
the convex decomposition of $p$ or of the separating witness
$g$) have  been implemented in  function \verb|decompose.m|,
and      the       results      are       summarized      in
Table~\ref{tab:decompositions}.
\begin{table}[htb!]
  \centering
  \begin{tabular}{| >{$}c<{$} | >{$}c<{$} ? >{$}c<{$} | >{$}c<{$} |  >{$}c<{$} |  >{$}c<{$} |}
    \hline
    \bf M &\bf \# & \bf d & \bf g \cdot b & \bf v & \bf V\\
    \hline\hline
    3 & 6 & 4 & & 128 & \sim 9 \cdot 10^{10}\\
    \hline
    4 & 6 & 4 & & 64 & \sim 9 \cdot 10^{10}\\
    \hline
    5 & 6 & 4 & & 465 & \sim 9 \cdot 10^{10}\\
    \hline\hline
    6 & 7 & 5 & 2 & 672 & \sim 4 \cdot 10^{12}\\
    \hline\hline
    7 & 8 & 5 & \nicefrac13 & 60752 & \sim 10^{13}\\
    \hline
    8 & 8 & 5 & \nicefrac83 & 7616 & \sim 10^{13}\\
    \hline
    9 & 8 & 5 & 2 & 10040 & \sim 10^{13}\\
    \hline
    10 & 8 & 4 & & 576 & \sim 3 \cdot 10^{11}\\
    \hline\hline
    11 & 9 & 5 & \nicefrac43 & 37136 & \sim 2 \cdot 10^{13}\\
    \hline
    12 & 9 & 5 & 2 & 107504 & \sim 2 \cdot 10^{13}\\
    \hline
    13 & 9 & 5 & \nicefrac23 & 8704 & \sim 2 \cdot 10^{13}\\
    \hline
    14 & 9 & 5 & \nicefrac85 & 488092 & \sim 2 \cdot 10^{13}\\
    \hline\hline
    \bf M &\bf \# & \bf d & \bf g \cdot b & \bf v & \bf V\\
    \hline
  \end{tabular}
  \tcaption{Set   of  all   extremal   measurements  as   in
    Table~\ref{tab:measurements}.   The two  columns labeled
    with $d$  and $\mathbf{g}  \cdot \mathbf{b}$  denote the
    minimal   dimension  $d$   of  the   classical  polytope
    $\mathcal{P}^{m  \to  n}_d$  such that  the  conditional
    probability distribution $p$ obtained by the measurement
    upon the  input of the  $16$ extremal states  belongs to
    $\mathcal{P}^{m \to  n}_d$, and (whenever such  a $d$ is
    larger  than  $4$) the  maximum  value  attained by  any
    witness $g$  separating $p$ from the  classical polytope
    $\mathcal{P}^{m \to n}_{d -  1}$, respectively.  The two
    columns labeled with $v$ and $V$ represent the number of
    effective vertices for such  a measurement and the total
    number   of   vertices   of   the   classical   polytope
    $\mathcal{P}^{m  \to n}_d$.   Even  in  the worst  case,
    corresponding to the last row of the table, the protocol
    described  here reduces  the size  of the  problem by  a
    factor $V/v \sim 4 \cdot 10^7$.}
  \label{tab:decompositions}
\end{table}
Hence, the following corollary follows.
\begin{cor}
  The signaling  dimension of the composition  of two squits
  named  HS model,  including all  eight possible  entangled
  effects, is five.
\end{cor}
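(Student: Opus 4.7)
The plan is to combine the characterization of extremal measurements from Section~\ref{sec:measurements} with the linear programming approach of Section~\ref{sec:signalingdimension}. By Lemma~\ref{lmm:measurements} and the observation that the signaling dimension is entirely determined by how the conditional probability distributions generated by extremal measurements with ray-extremal effects sit inside the classical polytopes, it suffices to examine only the measurements enumerated (up to reversible transformations) in Table~\ref{tab:measurements}, applied to the sixteen extremal factorized states of the HS model. Since any measurement with at most four non-zero entries trivially produces distributions that fit into $\mathcal{P}^{m \to n}_{4}$, only measurements number $3$ through $14$ can possibly force the signaling dimension above $4$.

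For the upper bound $\kappa \le 5$, I would proceed measurement by measurement. For each extremal measurement $M$ in the relevant subset, I would arrange the conditional probability distribution $p$ obtained by pairing its ray-extremal effects with the sixteen extremal states into a vector $\mathbf{b}$, and solve the primal feasibility linear program of Eq.~\eqref{eq:primal} with $d = 5$. By Lemma~\ref{lmm:vertices} the matrix $A$ has on the order of $10^{13}$ columns, so the critical step is to shrink $V$ to the much smaller effective count $v$ reported in Table~\ref{tab:decompositions}. This is achieved by the branch and bound pruning described earlier: rows of $p$ that are convex combinations of other rows are removed, and vertices that assign a one to a column where $p$ has a zero are discarded, while the recursion cuts any branch that selects more than $d$ distinct columns.

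For the lower bound $\kappa \ge 5$, it is enough to exhibit a single measurement whose distribution $p$ does not belong to $\mathcal{P}^{m \to n}_{4}$. I would solve the dual linear program of Eq.~\eqref{eq:dual} at $d = 4$ for each measurement that fails to decompose at that level, obtaining an explicit separating witness $\mathbf{g}$ together with the value $\mathbf{g} \cdot \mathbf{b}$ tabulated in Table~\ref{tab:decompositions}. Because the dual has as many constraints as there are vertices of $\mathcal{P}^{m \to n}_{4}$, a cutting plane or delayed column generation strategy, driven by the oracle that returns the classical vertex minimizing the inner product with a given vector, is essential in order to make the program tractable.

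The main obstacle is purely computational: both the primal and the dual programs are of a size that only becomes manageable after the symmetry reduction of Section~\ref{sec:measurements} and the branch and bound pruning of Section~\ref{sec:signalingdimension} are applied in tandem. Once these routines produce the data in Table~\ref{tab:decompositions}, namely a convex decomposition at $d = 5$ for every extremal measurement together with a strictly positive witness against $d = 4$ for at least one of them (indeed for measurements $6$ through $9$ and $11$ through $14$), the maximum over all extremal measurements of the minimal classical dimension equals $5$, and the corollary follows.
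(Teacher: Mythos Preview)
Your proposal is correct and follows essentially the same computational route as the paper: the corollary is obtained by running the primal and dual linear programs of Eqs.~\eqref{eq:primal}--\eqref{eq:dual} on the extremal measurements $3$--$14$ of Table~\ref{tab:measurements}, after the branch-and-bound reduction from $V$ to $v$, and reading off the data summarized in Table~\ref{tab:decompositions}. The only refinement worth noting is that the paper does not merely exhibit a single witness against $d=4$ but records, for every measurement that fails to decompose at $d=4$, both the witness value $\mathbf{g}\cdot\mathbf{b}$ and a feasible decomposition at $d=5$, so that the maximum of the minimal $d$ over all measurements is seen to equal five directly from the table.
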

Incidentally, our results proves  the tightness of the lower
bound on  the signaling dimension  of such a model  given in
Ref.~\cite{DBTBV17}.

\section{Conclusion}
\label{sec:conclusion}
\noindent
In  this   work  we   derived  analytical   and  algorithmic
techniques  to characterize  the  extremal measurements  and
compute the signaling  dimension of any given  system of any
given generalized  probabilistic theory.  As an  example, we
applied our results  to the composition of  two square bits.
The  algorithmic  techniques  we  derived  here,  and  whose
implementation is made available online~\cite{Dal23}, can be
directly applied  to other  system whose states  and effects
form polytopes, such as polygonal and polyhedral theories.

\section*{Acknowledgments}
\noindent
M.~D.  acknowledges support from  the Department of Computer
Science and Engineering, Toyohashi University of Technology,
from the International Research Unit of Quantum Information,
Kyoto  University, and  from the  JSPS KAKENHI  grant number
JP20K03774.   A.~T.  acknowledges  support from  the Silicon
Valley   Community   Foundation   Project   ID\#2020-214365.
F.~B. acknowledges  support from MEXT-JSPS  Grant-in-Aid for
Transformative Research Areas  (A) ``Extreme Universe,'' No.
21H05183;  from MEXT  Quantum  Leap  Flagship Program  (MEXT
QLEAP)  Grant No.   JPMXS0120319794; and  from JSPS  KAKENHI
Grants No. 20K03746 and No. 23K03230.

\section*{References}

\end{document}